\def\BState{\State\hskip-\ALG@thistlm}
\newtheorem{theorem}{Theorem}
\newtheorem{claim}[theorem]{Claim}
\newtheorem{definition}[theorem]{Definition}
\def\01{\{0,1\}}
\newcommand{\eps}{\varepsilon}
\begin{document}

\title{Improved Quantum Boosting}
\author{Adam Izdebski\thanks{Work done while a student at the University of Amsterdam. {\tt adam.izdebski1@gmail.com}}
\and
Ronald de Wolf\thanks{QuSoft, CWI and University of Amsterdam, the Netherlands. Partially supported by ERC Consolidator Grant 615307-QPROGRESS (which ended February 2019), and by the Dutch Research Council (NWO) through Gravitation-grant Quantum Software Consortium, 024.003.037, and through QuantERA project QuantAlgo 680-91-034. {\tt rdewolf@cwi.nl}}
}
\maketitle

\begin{abstract}
Boosting is a general method to convert a weak learner (which generates hypotheses that are just slightly better than random) into a strong learner (which generates hypotheses that are much better than random). Recently, Arunachalam and Maity~\cite{arunachalam&maity:qboosting} gave the first quantum improvement for boosting, by combining Freund and Schapire's  AdaBoost algorithm with a quantum algorithm for approximate counting. Their booster is faster than classical boosting as a function of the VC-dimension of the weak learner's hypothesis class, but worse as a function of the quality of the weak learner. In this paper we give a substantially faster and simpler quantum boosting algorithm, based on Servedio's SmoothBoost algorithm~\cite{servedio2003smooth}.
\end{abstract}

\section{Introduction}

\subsection{Boosting}

There has been tremendous growth in machine learning research and applications, both in practice (applying all sorts of methods on all sorts of data and seeing what works well) and in theory (computational learning theory). However, not very many ideas generated in theoretical machine learning have had a large impact on machine learning practice. One of the exceptions is \emph{boosting}, which is a simple, general, and widely applicable method to improve the generalization error of a given learning method, i.e.,  to convert a \emph{weak} learner into a \emph{strong} learner.

The set-up here is binary classification: we are trying to predict binary labels $y$ from points $x\in\mathcal{X}$. A typical case would be $\mathcal{X}=\01^n$. We are given $m$ labeled examples $(x_1,y_1),\ldots,(x_m,y_m)\in\mathcal{X}\times\{-1,1\}$ where the $x_i$s are independent and identically distributed (i.i.d.)\ according to some unknown distribution ${\mathcal{D}}$, and the binary labels are determined by some unknown \emph{target function} $f:\mathcal{X}\to\{-1,1\}$ that we are trying to learn, i.e., $y_i=f(x_i)$.
A weak learner $\mathcal{W}$ is an algorithm that can be fed a number of examples according to a specified distribution $D$ over the $m$ examples of the given sample, and that is then promised to generate a hypothesis $h:\mathcal{X}\to\{-1,1\}$ that is slightly better than random w.r.t.\ that $D$:
$$
\Pr_{x\sim D}[h(x)\neq f(x)]\leq 1/2-\gamma.
$$
Here $\gamma\in(0,1/2)$ is a small but positive number that gives the quality of the weak learner. We denote the ``cost'' of one run of $\mathcal{W}$ by $W$, and use this number also as an upper bound on the number of examples (distributed according to~$D$) that the weak learner uses.

A hypothesis with a generalization error that is just slightly better than random is not very useful by itself.
The goal of boosting is to convert the weak learner into a strong learner, which is one that produces hypotheses not only with small empirical error (i.e., w.r.t.\ the uniform distribution over the $m$ examples), but even with small generalization error w.r.t.\ the unknown target function $f:\mathcal{X}\to\{-1,1\}$ and the unknown distribution~${\mathcal{D}}$ that generated the examples:  
$$
\Pr_{x\sim \mathcal{D}}[h(x)\neq f(x)]\leq\epsilon.
$$
Here the desired upper bound $\eps$ on the final generalization error is a parameter of the strong learner. Unsurprisingly, achieving smaller $\eps$ requires a larger number of examples and larger runtime. For simplicity, in this introduction we focus on the case $\eps=1/3$ (in the body of the paper we cover the general case). 
Similarly, the smaller the initial advantage $\gamma$ is, the more work we will have to do find a hypothesis with small generalization error.\footnote{For simplicity we will assume this $\gamma$ is known to the strong learner we are trying to design, but this is not necessary: if it doesn't know 
$\gamma$, the strong learner can try exponentially decreasing guesses for $\gamma$ until it finds a hypothesis with small empirical error.}

The idea of boosting is to find a hypothesis with low empirical error by combining different runs of the weak learner on different distributions. 
Once we have a hypothesis with small empirical error on a sufficiently large set of examples, VC-theory implies that such a hypothesis will probably also have a small generalization error.

How can we find a hypothesis with small empirical error?
Because empirical error is measured w.r.t.\ the uniform distribution over $\{x_1,\ldots,x_m\}$, that will be our first distribution~$D^1$. We run the weak learner on~$D^1$, and receive a hypothesis $h_1$ that is slightly better than random w.r.t.\ the uniform distribution. The next iteration then biases the distribution away from the examples that are already well-classified, by increasing the probability of misclassified examples, yielding a new distribution $D^2$. We then run the weak learner again, to generate a hypothesis $h_2$ that is slightly better than random w.r.t.\ this new distribution, and hence hopefully better than $h_1$ on the examples that were misclassified by $h_1$. Then we bias the distribution further towards the still-misclassified examples, and so on. The intuition here is that the distributions $D^t$ ``zoom in'' on the hardest examples, the ones that are most difficult to classify correctly. After some $T$ iterations, the $T$ different weak hypotheses are combined into one hypothesis~$h$, typically by defining the latter as the sign of a linear combination $\sum_{t=1}^T\alpha_t h_t$ of the $T$ weak hypotheses $h_1,\ldots,h_T$. Surprisingly, already after a relatively small number of iterations, the resulting hypothesis will have small empirical error! Thus boosting converts the ability to generate weak hypotheses w.r.t.\ chosen distributions over the examples, into the ability to generate strong hypotheses, which have small error w.r.t.\ both the uniform distribution over the examples, and w.r.t\ the unknown target function~$f$ and distribution~$\mathcal{D}$ that generated our $m$ examples.

A number of classical boosting algorithms exist that instantiate this meta-algorithm in different ways. The most famous of these is probably Freund and Schapire's AdaBoost~\cite{freund&schapire:adaboost,freund1999short,schapire2013boosting} (short for ``adaptive boosting''), which biases the new distribution $D^{t+1}$ based on the error $\eps_t$ that $h_t$ made. It drives the empirical error all the way down to~0 (note that as soon as this error is $<1/m$ it must actually be~0). AdaBoost uses $T=O(\log(m)/\gamma^2)$ iterations. Each iteration takes time $\tilde{O}(m)$ to compute the error $\eps_t$ of $h_t$ and to update the distribution over the $m$ examples\footnote{The notation $\tilde{O}(f)$ means $O(f\cdot\mathrm{polylog}(f))$.} and runs the weak learner $\mathcal{W}$ once, at cost~$W$. This gives overall complexity
\[
\tilde{O}\left(\frac{W+m}{\gamma^2}\right).
\] 
How large should $m$ be in order to make the inference from low empirical error to low generalization error? This depends on the hypothesis space $\mathcal{H}_{weak}$ of the weak learner, in particular on its \emph{VC-dimension}~$d$ (defined in Section~\ref{sec:paclearning}). The hypothesis space $\mathcal{H}_{strong}$ of the boosting algorithm consists of all signs of linear combinations of up to $T$ elements of $\mathcal{H}_{weak}$. One can show that the VC-dimension of $\mathcal{H}_{strong}$ is $D=\tilde{O}(dT)$.
VC-theory implies that (for constant $\eps$) $m\approx D\approx dT\approx d/\gamma^2$ examples suffice to end up with generalization error $\leq \eps$ (with high probability over the choice of the sample). 
Accordingly, when re-expressed as a function of $d$ rather than $m$, the complexity of AdaBoost is
\begin{equation}\label{eq:adaboostcomplexity}
\tilde{O}\left(\frac{W}{\gamma^2}+\frac{d}{\gamma^4}\right).
\end{equation}

\subsection{Quantum boosting}

In the last few years there has been a surge in interest in possible ways in which \emph{quantum} computers might help improve machine learning (see~\cite{biamonteea:qml} for a survey of several algorithmic approaches and \cite{arunachalam:quantumlearningsurvey} for quantum learning theory).

Recently, Arunachalam and Maity~\cite{arunachalam&maity:qboosting} gave the first speed-up for boosting on a quantum computer.
Here the given sample is still the same classical sequence of $m$ labeled examples $(x_1,y_1),\ldots,(x_m,y_m)\in\mathcal{X}\times\{-1,1\}$, but these are now stored in a quantum-accessible classical memory, which means a quantum learner can query multiple examples in superposition.

The key insight of~\cite{arunachalam&maity:qboosting} is that the error $\eps_t$ of the base classifier in the $t$-th iteration of AdaBoost can be approximated faster (in time $o(m)$) using a quantum counting algorithm; this approximation is subtle because it involves both multiplicative and additive error, in different regimes for $\eps_t$.
Their method works not only for boosting classical weak learners, but also for boosting \emph{quantum} weak learners. These are fed quantum examples w.r.t.\ the distribution~$D$:
\begin{equation}\label{eq:qexample}
\sum_{i=1}^m \sqrt{D(x_i)}\ket{x_i,y_i}.
\end{equation}
If the quantum weak learner $\mathcal{W}$ expects to receive $W$ such examples, the quantum booster will have to prepare $W$ copies of this state to feed into $\mathcal{W}$.\footnote{The classical boosting literature~\cite{schapire2013boosting} distinguishes ``boosting by resampling'' and ``boosting by reweighting''. Like Arunachalam and Maity~\cite{arunachalam&maity:qboosting}, we follow ``boosting by resampling'' and explicitly prepare the $W$ quantum or classical examples (w.r.t.\ $D^t$) that the weak learner needs, rather than just modifying~$D^t$.}

The quantum version of AdaBoost of~\cite{arunachalam&maity:qboosting} uses the same number of iterations as classical AdaBoost, but improves the complexity of each iteration (at least as a function of $m$ or $d$).
Their main complexity upper bound is:
\begin{equation}\label{eq:qadaboost}
\tilde{O}\left(\frac{W^{1.5}\sqrt{d}}{\gamma^{11}}\right).
\end{equation}
Comparing with the complexity of classical AdaBoost  Eq.~\eqref{eq:adaboostcomplexity}, this gives a speed-up over classical boosting in terms of the dependence on the VC-dimension~$d$ of the weak learner's hypothesis class, but at the expense of a significant deterioration in terms of the dependence on the quality of the weak learner~$\gamma$ and a milder deterioration in terms of the weak learner's cost~$W$.

\subsection{Our results}

In this paper we give a simpler and faster quantum boosting algorithm. Instead of AdaBoost, our starting point will be Servedio's SmoothBoost algorithm~\cite{servedio2003smooth}, which we explain in Section~\ref{sec:smoothboost}. 
Servedio's motivation for smooth boosting was to deal with malicious noise (at a rate that depends on~$\gamma$) in the sample better than AdaBoost. However, SmoothBoost is also very suitable for ``quantization'' thanks to the following advantages that it has over AdaBoost:
\begin{itemize}
\item SmoothBoost doesn't need to calculate or approximate the error $\eps_t$ of $h_t$ on the $m$ examples, which means we don't need to apply approximate quantum counting for this.
\item The distributions $D^t$ that it generates are ``smooth'' (whence its name), in the sense that no example has probability much bigger than the uniform probability $1/m$. Generating quantum examples as in Eq.~\eqref{eq:qexample} is cheaper when none of the probabilities is big.
\item The weights $\alpha_t$ in the final linear linear combination $\sum_t\alpha_t h_t$ are all equal to~1 in SmoothBoost. In contrast, AdaBoost uses $\alpha_t=\frac{1}{2}\ln((1-\eps_t)/\eps_t)$, hence the quantum algorithm's approximation errors in $\eps_t$ lead to approximation errors in $\alpha_t$ that need to be kept under control.
\end{itemize}
In addition to exploiting these ``classical'' advantages in order to obtain a simpler and faster quantum booster, we also give an improved procedure to generate quantum examples over the $m$ examples from~$S$. This procedure assumes access to a non-normalized version of $D^t$ and doesn't have to worry about the normalizing factor. As explained in Section~\ref{ssec:prepqexample}, in a way quantum mechanics will take care of the proper normalization for us.\footnote{We also generate the quantum examples exactly, while \cite{arunachalam&maity:qboosting} only generate them approximately and hence has to deal with the way the errors in this process affect the other parts of their boosting procedure. Our example-generating procedure could also be used to improve the bounds of quantum AdaBoost~\cite{arunachalam&maity:qboosting}, though the result won't be as efficient as our quantum SmoothBoost. Note that if we want to use Quantum SmoothBoost with a classical weak learner $\mathcal{W}$, we can just measure the $W$ quantum examples in the computational basis to obtain the $W$ classical examples distributed according to $D^t$ that such a $\mathcal{W}$ needs as input. This still gives a speed-up in terms of the desired generalization error~$\eps$ compared to  classically generating those $W$ examples.} 

We obtain the following upper bound on the complexity of our Quantum SmoothBoost:\footnote{This bound is when we aim at constant generalization error $\eps=1/3$. We also make explicit the complexity for much smaller $\eps$ (see Theorem~\ref{th:qboostmain}).}
\begin{equation}\label{eq:qsmoothboostcomplexity}
\tilde{O}\left(\frac{W}{\gamma^4} + \frac{\sqrt{d}}{\gamma^5}\right).
\end{equation}
This improves over the complexity of the booster of \cite{arunachalam&maity:qboosting} (as given in Eq.~\eqref{eq:qadaboost})  in terms of the parameters $W$ and (especially)~$\gamma$. The $\gamma$-dependence is still worse than classical AdaBoost (as given in Eq.~\eqref{eq:adaboostcomplexity}), but not by large powers anymore. It is an interesting open question whether this $\gamma$-dependence can be improved further.

\subsection{Related work}

Our main sources of inspiration for this paper were the quantum AdaBoost of Arunachalam and Maity~\cite{arunachalam&maity:qboosting} and classical SmoothBoost of Servedio~\cite{servedio2003smooth}, and we have tried in this paper to combine the best elements of both.

Here we mention a number of related quantum papers.
Wang et al.~\cite{wangea:qboosting} (which preceded \cite{arunachalam&maity:qboosting}) give a quantum speed-up for a specific subtask of AdaBoost, namely to compute the coefficients $\alpha_t=\frac{1}{2}\ln((1-\eps_t)/\eps_t)$ that combine given base classifiers $h_1,\ldots,h_T$ into a good hypothesis  $h=sign(\sum_t \alpha_t h_t)$.
These weights $\alpha_t$ are approximated more efficiently than is possible classically using a version of approximate quantum counting. 
This, however, assumes the base classifiers have already been generated and sidesteps the most important aspect of AdaBoost, which is to generate the $h_t$'s adaptively by running the weak learner on a distribution $D^t$ that depends on $h_1,\ldots,h_{t-1}$.
The even earlier paper by Schuld and Petruccione~\cite{schuld&petruccione:qensembles} considers quantum ensembles of classifiers (rather than the linear combinations used in boosting) and runs AdaBoost as a subroutine, but does not give a quantum boosting algorithm.

AdaBoost may be viewed as an instance of the multiplicative weights update method, see for instance the presentation in~\cite[Section~3.6]{arora2012MultiplicativeWeightsAlg}.
There have been several quantum speed-ups for multiplicative weights methods in other contexts, particularly the quantum SDP-solvers of Brand\~{a}o et al.~\cite{brandao2016QSDPSpeedup,apeldoorn2017QSDPSolvers,brandao2017QSDPSpeedupsLearning,apeldoorn2018ImprovedQSDPSolving}, and the very recent quantum version of the hedge algorithm of Hamoudi et al.~\cite{hamoudiea:qhedge}.
However, none of those speed-ups for versions of multiplicative weights seems directly applicable to our boosting setting.

\section{Preliminaries}

\subsection{PAC learning}\label{sec:paclearning}

In this section we give a brief introduction to the PAC learning framework, which provides theoretical guarantees on learnability. The textbook by Shalev-Shwartz and Ben-David~\cite{ShalevShwartz2014UnderstandingML} provides an excellent and detailed introduction to the topic of classical PAC learning. 

To formally introduce the PAC learning framework, let $\mathcal{D}$ denote a probability distribution over the set of \emph{points} $\mathcal{X}$. We want to learn an unknown \emph{target function} $f:\mathcal{X}\to\mathcal{Y}$.
We will assume here that the set of labels $\mathcal{Y}$ is just $\{-1,1\}$, so we are dealing with binary classification. 
A typical situation to keep in mind is the important special case of learning Boolean functions, where $\mathcal{X}=\01^n$, or $\mathcal{X}=\cup_{n\geq 1}\01^n$.

Learning begins by choosing a learning algorithm (a ``learner'') with an associated \emph{hypothesis class} $\mathcal{H}$ of functions $h:\mathcal{X}\to\{-1,1\}$. 
This hypothesis class could be any set of functions, but good examples to keep in mind are cases where $\mathcal{X}=\01^n$ and $\mathcal{H}$ consists of objects with bounded computational power, for instance all Boolean circuits of at most a certain size, all neural networks with a specific depth and number of nodes, or all decision trees of at most a certain depth.
We will assume that each $h\in\mathcal{H}$ has a succinct description and that we can efficiently evaluate a given $h$ on a given $x\in\mathcal{X}$. For simplicity we assume such an evaluation has unit cost.

The learner is given access to a \emph{sample} $S = ((x_1, y_1),\ldots,(x_m, y_m))$, which is the training data. 
The points $x_i$ are i.i.d.\ generated according to an unknown distribution $\mathcal{D}$ on $\mathcal{X}$, and the labels $y_i=f(x_i)$ are determined by the target function $f$ that we are trying to learn. 
The learner's goal is to find an $h\in\mathcal{H}$ that fits well with the given training data, in the hope that this $h$ will generalize well to points that were not part of the data, in the sense of mostly giving the same labels as the target function. The PAC learning framework is a distribution-free setting, so we would like to design a learner that works well for every $\mathcal{D}$, in the sense of outputting a hypothesis with low generalization error.

\begin{definition}
The generalization error of $h:\mathcal{X}\to\mathcal{Y}$  w.r.t.\ target function $f:\mathcal{X}\to\mathcal{Y}$ under distribution $\mathcal{D}$ is 
\[
err(h,f,{\mathcal{D}}) = \Pr_{x\sim \mathcal{D}}[h(x)\neq f(x)].
\]
\end{definition}

Generalization error is often referred to as the \emph{true} error; it is the quantity the learner is really trying to minimize over the class $\mathcal{H}$ of available hypotheses. 

%

As the distribution $\mathcal{D}$ is anyway unknown, the generalization error of a hypothesis $h$ cannot be calculated and the learner uses the \emph{empirical} error of a hypothesis~$h$ (the fraction of the sample that $h$ mislabels) to measure its performance, as a proxy for the generalization error.

\begin{definition}
The empirical error of $h:\mathcal{X}\to\mathcal{Y}$ w.r.t.\ sample $S = ((x_1, y_1),\ldots,(x_m, y_m))$ is
\[
\hat{err}(h,S) = \Pr_{i\in_R [m]}[h(x_i) \neq y_i],
\]
where $i\in_R [m]$ means that $i$ is taken uniformly at random from $[m]=\{1,\ldots,m\}$.
\end{definition}


\begin{definition}
An $(\eps,\delta)$-PAC learner for a concept class $\mathcal{C}$ with hypothesis class $\mathcal{H}$ and sample complexity $m$, is an algorithm $\mathcal{A}$ such that the following holds for all target functions $f\in\mathcal{C}$ and all distributions $\mathcal{D}$ on $\mathcal{X}$:
\begin{itemize}
\item $\mathcal{A}$ takes as input $m$ examples $(x_1,f(x_1)),\ldots,(x_m,f(x_m))$ where the $x_i$ are i.i.d.\ according to~$\mathcal{D}$.
\item $\mathcal{A}$ outputs an $h\in\mathcal{H}$ which is ``Probably Approximately Correct'' in the sense that
$$
\Pr[err(h,f,{\mathcal{D}})\leq\eps]\geq 1-\delta,
$$
where the probability is taken over the sample and over the learner's internal randomness.
\end{itemize}
\end{definition}

The end goal is to find a learner with small sample complexity~$m$, small error probability~$\delta$, and (most important of all) small generalization error~$\eps$. Often we will start, however, with a ``weak'' learner, one whose generalization error is only slightly better than random.
Since we restricted to binary labels ($\mathcal{Y}=\{-1,1\}$), generalization error $\eps=1/2$ is no better than random guessing. A weak learner is a learner that does slightly better than that:

\begin{definition}[Weak learning]
A $\gamma$-weak learner $\mathcal{W}$ for concept class $\mathcal{C}$ with hypothesis class $\mathcal{H}_{weak}$ is a $(1/2-\gamma,0)$-PAC learner.
Hypotheses returned by a weak learner are called \emph{base classifiers}. 
\end{definition}

Following Servedio~\cite{servedio2003smooth}, we assume the weak learner~$\mathcal{W}$ has error probability $\delta=0$, so it always outputs a hypothesis with generalization error $\leq 1/2-\gamma$.
If instead we start with a $\mathcal{W}$ that has non-zero error probability, say $1/3$, then we can reduce this error probability to small $\delta>0$ as follows. Run $\mathcal{W}$ a total number of  $r=\lceil\log_3(1/\delta)\rceil$ times, each time with fresh independent examples. Then, except with probability $\leq (1/3)^r\leq\delta$, at least one of the returned hypotheses $h_1,\ldots,h_r\in\mathcal{H}_{weak}$ will have generalization error $\leq 1/2-\gamma$. However, finding (with success probability $\geq 1-\delta$) among these $r$ hypotheses one with such low error has a cost. Deciding (with success probability $\geq 2/3$) for a given hypothesis~$h$ whether it has error $\leq 1/2-\gamma$ under a given distribution can be done by sampling  $O(1/\gamma^2)$ examples according to that distribution and estimating the fraction of examples where $h$ predicts the label correctly. Searching over the $r$ hypotheses to find a good one adds a factor of $O(r)$ to the classical cost, and reducing the overall error probability from $1/3$ to $\delta$ adds another factor of $O(\log(1/\delta))$.%
\footnote{In the quantum case the $O(1/\gamma^2)$ can be replaced by $O(1/\gamma)$ using quantum approximate counting (Theorem~\ref{th:approxcount} below), and the $O(r)$ can be replaced by $O(\sqrt{r})$ using Grover's algorithm~\cite{grover:search}.}

Suppose we ideally want to run an errorless weak learner $T$ times, namely once in each of $T$ iterations. But instead we start with a weak learner with error probability~$1/3$. Reducing $1/3$ to $\delta\ll 1/T$ allows us to take a union bound over all $T$ iterations, and conclude that with high probability each of the $T$ iterations produces a base classifier with generalization error $\leq 1/2-\gamma$ (w.r.t.\ the distribution $D^t$ of that iteration).
Because here we assumed our weak learner has no error probability from the start, we do not have to factor in the additional cost for this error reduction, but it anyway doesn't significantly affect the complexities of classical or quantum boosting (Eqs.~\eqref{eq:adaboostcomplexity} and \eqref{eq:qsmoothboostcomplexity} respectively).

\bigskip


\subsection{How many examples suffice to ensure small generalization error?} 
The number of examples that are necessary and sufficient for learning is governed by the VC-dimension of the relevant hypothesis class and by the desired generalization error, as follows.
A set $S\subseteq\mathcal{X}$ of $d$ points is said to be \emph{shattered} by $\mathcal{H}$ if for each of the $2^d$ labelings $\ell:S\to\01$, there exists an $h \in\mathcal{H}$ that agrees with $\ell$ on the points in $S$. The \emph{VC-dimension} of a hypothesis class $\mathcal{H}$ is the size of a largest $S$ that is shattered by $\mathcal{H}$. Intuitively, if the VC-dimension of $\mathcal{H}$ is small, then it should be relatively simple to find a good hypothesis in it, i.e., one that minimizes empirical error. 

The following theorem implies that for sufficiently large~$m$,
\emph{every $h\in\mathcal{H}$} has a generalization error that is only slightly worse than its empirical error. Such a result means it suffices to look for a hypothesis with small empirical error. 

\begin{theorem}[Theorem~2.5 in \cite{schapire2013boosting}]\label{thm:generalizationbound}
Let $\mathcal{H}$ be a hypothesis class of finite VC-dimension $d$. Assume that a sample $S$ of size $m$ is chosen for some target function $f$, i.i.d.\ according to some distribution~$\mathcal{D}$. 
Then for every $\eta > 0$ it holds that 
\[
\Pr[\exists h \in \mathcal{H}: err(h,f,\mathcal{D}) > \hat{err}(h,S) + \eta] \leq 8\left(\frac{em}{d}\right)^d\exp{\frac{-m\eta^2}{32}}.
\]
\end{theorem}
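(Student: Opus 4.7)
The plan is to prove this uniform convergence bound via the three classical ingredients: (i) symmetrization by a ghost sample, (ii) a random-swap argument combined with Hoeffding's inequality for each fixed hypothesis, and (iii) Sauer's lemma to bound the effective number of behaviors of $\mathcal{H}$ on the sample.

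First, I would introduce an independent ghost sample $S'=((x'_1,y'_1),\ldots,(x'_m,y'_m))$ drawn i.i.d.\ from $\mathcal{D}$. A symmetrization argument—taking any putative bad $h$ witnessed by $S$ and using Chebyshev on $S'$ to argue that $\hat{err}(h,S')$ is within $\eta/2$ of $err(h,f,\mathcal{D})$ with probability at least $1/2$—gives
\[
\Pr_S\!\left[\exists h\in\mathcal{H}:\, err(h,f,\mathcal{D}) > \hat{err}(h,S)+\eta\right] \leq 2\,\Pr_{S,S'}\!\left[\exists h\in\mathcal{H}:\, \hat{err}(h,S')-\hat{err}(h,S) > \eta/2\right],
\]
valid whenever $m\eta^2$ is not too small (otherwise the claimed bound is trivially $\geq 1$). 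This converts uniform convergence against the unknown $\mathcal{D}$ into a purely combinatorial problem on the $2m$ sample points.

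Second, I would condition on the multiset $S\cup S'$ and realize the conditional distribution of $(S,S')$ by independently swapping $(x_i,y_i)\leftrightarrow(x'_i,y'_i)$ with probability $1/2$. For each fixed $h$ the quantity $\hat{err}(h,S')-\hat{err}(h,S)$ is then an average of $m$ independent signed $\{-1/m,+1/m\}$-bounded terms, so Hoeffding gives $\Pr[\,\cdot\, > \eta/2 \mid S\cup S'] \leq 2\exp(-m\eta^2/8)$. Crucially, because we have conditioned on the $2m$ points, only hypotheses that induce distinct labelings on those points matter in the existential over $h$; by Sauer's lemma this number is at most $\sum_{i=0}^{d}\binom{2m}{i} \leq (2em/d)^d$. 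A union bound over these distinct labelings, together with the factor $2$ from symmetrization, yields a bound of the form $c_1\,(c_2\, em/d)^d \exp(-m\eta^2/c_3)$.

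The main obstacle is not the proof architecture but the bookkeeping of constants: to land on exactly $8(em/d)^d\exp(-m\eta^2/32)$ one must perform symmetrization with threshold $\eta/4$ on the right-hand side (so that Hoeffding is applied to a deviation of $\eta/4$, producing the exponent $-m\eta^2/32$) and trade the factor $2^d$ from $(2em/d)^d = 2^d(em/d)^d$ between the symmetrization constants and the growth-function estimate, following the detailed accounting of Schapire's textbook. Each of the three structural ingredients—symmetrization, Hoeffding after the random swap, and Sauer's lemma—is routine on its own; the only delicate part is their calibration to produce the specific constants $8$, $(em/d)^d$, and $32$ appearing in the statement.
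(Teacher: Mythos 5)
The paper does not prove this theorem at all: it is imported verbatim as Theorem~2.5 of the cited boosting textbook \cite{schapire2013boosting}, so there is no in-paper proof to compare against. Your sketch is the canonical Vapnik--Chervonenkis argument (ghost-sample symmetrization, random swaps plus Hoeffding, Sauer's lemma), which is indeed how the cited source and the standard references establish exactly this bound, and the architecture you describe is correct. The one step that is genuinely loose is your closing remark that the factor $2^d$ arising from $\left(\frac{2em}{d}\right)^d = 2^d\left(\frac{em}{d}\right)^d$ can be ``traded'' against the symmetrization constants: it cannot, since $2^d$ is not a constant. To land on $\left(\frac{em}{d}\right)^d$ rather than $\left(\frac{2em}{d}\right)^d$ one needs a second randomization step (Rademacher signs or a random permutation of the $2m$ points) after which the supremum, and hence the union bound, ranges only over the distinct behaviors of $\mathcal{H}$ on the \emph{original} $m$ points; this is also where the threshold drops to $\eta/4$ and produces the exponent $-m\eta^2/32$, as you correctly anticipate. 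With that step made explicit your outline is a complete and correct proof of the stated bound.
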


\noindent
If we set $\eta=\eps/2$ and 
\[
m=O\left(\frac{d\log(d/(\delta\eps))+\log(1/\delta)}{\eps^2}\right),
\] 
with a sufficiently large constant in the $O(\cdot)$,
then (except with probability $\delta$), each $h\in\mathcal{H}$ has a generalization error that is at most $\eps/2$ bigger than its empirical error.
Accordingly, if a learner now outputs any hypothesis $h\in\mathcal{H}$ whose empirical error is $\leq \eps/2$, then its generalization error will be $\leq\eps$, as desired.

\subsection{Quantum PAC learning and helpful quantum  subroutines}\label{ssec:qexample}

In order to introduce the quantum boosting algorithm in Section~\ref{sec:qsmoothBoost}, we explain the query model that the quantum algorithm works with. We say that an algorithm has \emph{query access} to a string $z \in Z_a^N$ over alphabet $Z_a=\{0,\ldots,a-1\}$ if it can apply a unitary $O_z$ such that 
\[
O_z: \ket{i,b} \mapsto \ket{i, b\oplus x_i},
\]
where $i \in \01^{\lceil\log N\rceil}$, $b\in Z_a$, and $\oplus$ denotes addition modulo~$a$. Naturally, a quantum algorithm can apply $O_z$ on a superposition of distinct inputs $i$.

In the classical setting we assumed a learner is given a sample $S=((x_1,y_1), \ldots, (x_m, y_m))$ of $m$ labeled examples. Here points $x_i \in \mathcal{X}$ are independently drawn from an unknown distribution $\mathcal{D}$, and labeled $y_i=f(x_i)$ according to an unknown target function~$f$. Such a classical sample will still be the  starting point of our quantum boosting algorithm; we assume the learner has query access to the sample (viewed as a string $z\in(\mathcal{X}\times\{-1,1\})^m$).
One may think of the sample as being stored in a quantum-accessible classical memory, sometimes called QRAM. However, our setting also encompassed the case of synthetic data, where we would have an efficient procedure which, on input~$i$, computes the example $(x_i,y_i)$.

Even though the initially given sample is classical, like Arunachalam and Maity~\cite{arunachalam&maity:qboosting} we will set up our quantum booster so that it can work to improve a classical weak learner but also to improve a \emph{quantum} weak learner. The latter is given \emph{quantum examples} w.r.t.\ distribution~$D$: 
\[
\sum_{x\in\mathcal{X}} \sqrt{D(x)}\ket{x,f(x)}.
\]
One can think of a quantum example as the coherent version of a random example $(x,f(x))$ where $x\sim D$.
A quantum learner is given access to several copies of the quantum example and performs a POVM measurement, where each outcome is associated with a hypothesis~$h$ in its hypothesis class. It won't matter for the purposes of this paper, but \cite{arunachalam:optimalpaclearning} proved that in the general PAC and agnostic learning settings, the required number of classical and quantum examples are the same up to contant factor.

In the case of boosting, the weak learner will be fed quantum examples w.r.t.\ a distribution~$D$ that only has support on the $m$ given examples.
Since our initially given sample is classical, our boosting algorithm will itself have to prepare the quantum examples that it wants to feed into the weak learner in each iteration, and we have to (and will) account for the cost of this.

We also assume that we can evaluate a given $h$ (in the weak learner's hypothesis class $\mathcal{H}_{weak}$) in superposition, meaning we can apply a unitary that maps $\ket{h}\ket{x}\ket{b}\mapsto\ket{h}\ket{x}\ket{h(x)\cdot b}$; here the basis states of the first space are the names of the $h\in{\cal H}_{weak}$, the basis states of the second space are the elements of $\mathcal{X}$, and the basis states of the third space are the labels in $\mathcal{Y}=\{-1,1\}$.

The definitions of PAC learning and weak learning straightforwardly generalize to the quantum setting. We refer to the survey~\cite{arunachalam:quantumlearningsurvey} for more on this model. 
%
%
%
The following basic quantum subroutines can be derived from Brassard et al.~\cite{bhmt:countingj} (or from~\cite{aaronson&rall:qcounting} if one wants to avoid use of the quantum Fourier transform):

\begin{theorem}[Amplitude amplification]\label{th:amplampl}
Suppose we have an $m$-qubit unitary $U$ such that
$$
U\ket{0^m}=\sqrt{a}\ket{\phi_0}\ket{0}+\sqrt{1-a}\ket{\phi_1}\ket{1},
$$
and we know a lower bound $a'$ on~$a$.
Then there exists a quantum algorithm $V$ using $O(1/\sqrt{a'})$ applications of $U$ and $U^\dagger$, and $\tilde{O}(1/\sqrt{a'})$ other gates, such that
$$
V\ket{0^m}=\sqrt{b}\ket{\phi_0}\ket{0}+\sqrt{1-b}\ket{\phi_1}\ket{1},
$$
where $b\in [1/2,1]$.
\end{theorem}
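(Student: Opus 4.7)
The plan is to build $V$ by the standard amplitude amplification construction of~\cite{bhmt:countingj}. First define the reflections $S_\chi = I - 2\Pi_{\mathrm{good}}$, where $\Pi_{\mathrm{good}}$ projects onto states whose last qubit is $\ket{0}$, and $S_0 = I - 2\ket{0^m}\bra{0^m}$; each is realizable with $\tilde{O}(m)$ elementary gates. Form the Grover-style operator $Q = -U S_0 U^\dagger S_\chi$, which costs one call to $U$ and one to $U^\dagger$ per application. A direct two-dimensional calculation shows that the subspace spanned by $\ket{\phi_0}\ket{0}$ and $\ket{\phi_1}\ket{1}$ is $Q$-invariant, and that $Q$ acts on it as a rotation by angle $2\theta$, where $\theta = \arcsin\sqrt{a}\in(0,\pi/2]$. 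Hence
\[
Q^k U\ket{0^m} = \sin\bigl((2k+1)\theta\bigr)\,\ket{\phi_0}\ket{0} + \cos\bigl((2k+1)\theta\bigr)\,\ket{\phi_1}\ket{1}.
\]

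If $\theta$ were known, I would pick $k$ making $(2k+1)\theta$ closest to $\pi/2$. Since only the lower bound $\theta \geq \theta':=\arcsin\sqrt{a'}$ is available, I would handle two regimes separately. If $a'\geq 1/2$, take $V=U$, since already $b=a\geq 1/2$. Otherwise, sample $K$ uniformly at random from $\{0,1,\dots,M-1\}$ with $M = \lceil c/\sqrt{a'}\rceil$ for an appropriate constant $c$, and apply $Q^K U\ket{0^m}$. A short trigonometric identity gives the expected success probability
\[
\bar{p} \;=\; \frac{1}{M}\sum_{k=0}^{M-1}\sin^2\bigl((2k+1)\theta\bigr) \;=\; \frac{1}{2} - \frac{\sin(4M\theta)}{4M\sin(2\theta)},
\]
and for $\theta\in[\theta',\pi/4]$ one has $\sin(2\theta)\geq\sin(2\theta')=\Omega(\sqrt{a'})$, so the choice $M=\Theta(1/\sqrt{a'})$ drives the second term below any prescribed constant.

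The main obstacle is that $\sin(2\theta)$ also becomes small as $\theta$ approaches $\pi/2$, so no single fixed $M$ controls the second term uniformly across $[\theta',\pi/2]$. But $\theta$ close to $\pi/2$ means $a$ close to $1$ and no amplification is needed, so the case split above handles it. A more technical issue is that the random-$K$ procedure produces a mixture rather than a pure state of the form $\sqrt{b}\ket{\phi_0}\ket{0}+\sqrt{1-b}\ket{\phi_1}\ket{1}$ with a deterministic $b\geq 1/2$; to match the literal statement I would replace the random choice by fixed-point amplitude amplification (Grover's $\pi/3$-iterate, or the Yoder--Low--Chuang construction), which, given $a'$, deterministically boosts the good-subspace amplitude above $\sqrt{1/2}$ using $O(1/\sqrt{a'})$ calls to $U$ and $U^\dagger$ together with $\tilde{O}(1/\sqrt{a'})$ other gates. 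Summing the gate counts gives the claimed bounds.
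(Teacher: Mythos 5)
The paper does not actually prove this theorem---it is stated as a known subroutine derivable from Brassard, H{\o}yer, Mosca and Tapp---so your job was to reconstruct the standard argument, and you have essentially done that. Your construction of the Grover iterate $Q=-US_0U^\dagger S_\chi$, the two-dimensional rotation analysis, and the averaging identity $\bar p = \tfrac12 - \tfrac{\sin(4M\theta)}{4M\sin(2\theta)}$ are all correct, and you are right to flag the two genuine subtleties: a fixed iteration count cannot work for all $\theta\in[\theta',\pi/2]$, and randomizing the iteration count yields a mixed state rather than the pure state with deterministic $b\ge 1/2$ that the statement literally asserts.

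One of the two fixes you offer, however, does not deliver the claimed complexity. Grover's $\pi/3$ fixed-point algorithm reduces the failure probability from $1-a$ to $(1-a)^{3^n}$ using $3^n$ calls to $U$ and $U^\dagger$; to push $(1-a')^{3^n}$ below $1/2$ you need $3^n=\Omega(1/a')$, i.e.\ $O(1/a')$ rather than $O(1/\sqrt{a'})$ applications---this loss of the quadratic speedup is precisely the known drawback of the $\pi/3$ method. The Yoder--Low--Chuang construction is the right choice: it achieves good-subspace amplitude $\ge\sqrt{1-\delta^2}$ for every $a\ge a'$ using $O(\log(1/\delta)/\sqrt{a'})$ reflections, so taking $\delta$ constant gives the theorem. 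Two cosmetic remarks: YLC uses phased reflections, so its output is $\alpha\ket{\phi_0}\ket{0}+\beta\ket{\phi_1}\ket{1}$ with complex $\alpha,\beta$ and $|\alpha|^2\ge 1/2$, which matches the statement only up to relative phases (harmless for the paper's application in Theorem~\ref{th:qexample}, where one simply measures the last qubit); and the reflection $S_0$ about $\ket{0^m}$ costs $O(m)$ gates per iteration, so the ``$\tilde O(1/\sqrt{a'})$ other gates'' count implicitly treats the register width as absorbed into the $\tilde O$---an imprecision in the theorem statement itself rather than in your argument.
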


\begin{theorem}[Approximate counting]\label{th:approxcount}
Suppose we have query access to a string $z\in[0,1]^N$, with sum $s=\sum_{i=1}^N z_i\geq 1$. There exists a quantum algorithm that uses $O(\frac{1}{\eps}\sqrt{N}\log(1/\delta))$ queries and $\tilde{O}(\frac{1}{\eps}\sqrt{N}\log(1/\delta))$ other operations, 
and that outputs (except with probability $\leq\delta$) an  $\tilde{s}$ such that $(1-\eps)s\leq \tilde{s} \leq(1+\eps)s$.
\end{theorem}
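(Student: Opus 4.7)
My plan is a three-step reduction to the amplitude estimation algorithm of Brassard, H\o yer, Mosca, and Tapp. The idea is to encode the sum $s$ as the squared amplitude of a flag qubit in a carefully prepared state, then invoke amplitude estimation to approximate this amplitude multiplicatively, and finally boost the success probability by a standard median-of-repetitions argument.

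For the first step I would build a unitary $U$ as follows: starting from $\ket{0^m}$, apply Hadamards to the index register to get $\frac{1}{\sqrt{N}}\sum_i\ket{i}$, query $O_z$ to load $z_i$ into a work register, apply a rotation on a fresh flag qubit by angle $\arcsin\sqrt{z_i}$ controlled on the work register, and uncompute the work register with a second query to $O_z$. The resulting state
\[
U\ket{0^m} = \frac{1}{\sqrt{N}}\sum_{i=1}^N\ket{i}\bigl(\sqrt{z_i}\ket{0} + \sqrt{1-z_i}\ket{1}\bigr)
\]
has flag-qubit-zero probability exactly $a = s/N$, at the cost of two queries to $O_z$ and some ancillary gates. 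Crucially, the hypothesis $s\geq 1$ supplies the a priori lower bound $a\geq 1/N$.

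For the second step I would run BHMT amplitude estimation on $U$ with $M = C\sqrt{N}/\eps$ iterations, for a sufficiently large absolute constant $C$. BHMT guarantees that the output $\tilde a$ satisfies $|\tilde a - a|\leq 2\pi\sqrt{a(1-a)}/M + \pi^2/M^2$ with probability at least $8/\pi^2$. Plugging in $M = C\sqrt{N}/\eps$ and using $aN\geq 1$, both error terms are at most $\eps a/2$ when $C$ is large enough, so $|\tilde a - a|\leq \eps a$; setting $\tilde s = N\tilde a$ then yields $(1-\eps)s\leq\tilde s\leq(1+\eps)s$. This step uses $O(\sqrt N/\eps)$ queries to $O_z$ and $\tilde O(\sqrt N/\eps)$ other gates (for the controlled rotation and BHMT's QFT).

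The third step is standard probability boosting: run the above procedure $k = O(\log(1/\delta))$ times independently and output the median of the $k$ estimates; a Chernoff bound shows the median lies in the target interval except with probability $\leq\delta$, multiplying both the query and the gate counts by $\log(1/\delta)$. The only real subtlety I see is that the oracle model in Section~\ref{ssec:qexample} is stated for a discrete alphabet whereas here $z_i\in[0,1]$; this is handled by assuming $O_z$ returns $z_i$ to $O(\log(N/\eps))$ bits of precision, which lets the controlled rotation be implemented with negligible error and contributes only polylogarithmic overhead absorbed into $\tilde O(\cdot)$. The main conceptual step is recognizing the reduction to amplitude estimation; once that is in place the rest is routine bookkeeping.
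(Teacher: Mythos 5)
Your proposal is correct and is exactly the standard derivation from BHMT amplitude estimation that the paper invokes by citation without spelling out: encode $s/N$ as a flag-qubit amplitude via two queries and a controlled rotation, run amplitude estimation with $\Theta(\sqrt{N}/\eps)$ iterations using $s\geq 1$ (hence $a\geq 1/N$) to turn BHMT's additive guarantee into a multiplicative one, and take a median of $O(\log(1/\delta))$ repetitions. The constant-checking and the finite-precision caveat about the oracle alphabet are both handled appropriately.
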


\section{SmoothBoost}\label{sec:smoothboost}

We first consider the classical SmoothBoost algorithm of Servedio~\cite{servedio2003smooth}. It generates only smooth distributions, in the sense that none of the examples get too much weight. In the next section we will introduce a quantum version of SmoothBoost.

We give the pseudocode of SmoothBoost in Algorithm \ref{alg:smoothboost}. There are a few cosmetic changes compared to the pseudocode of~\cite{servedio2003smooth} that will make it easier for us to quantize it later.
The algorithm takes four inputs. 
First, a $\gamma$-weak learner $\mathcal{W}$ with associated hypothesis class $\mathcal{H}_{weak}$ and cost and sample complexity~$W$. Second, a sample $S = ((x_1, y_1), (x_2, y_2), \ldots, (x_m, y_m)) \in (\mathcal{X} \times \{-1,1\})^m$, for some sample size~$m$ that we will choose later. Lastly, a parameter $\kappa \in (0,1)$ which controls the empirical error of SmoothBoost and a parameter $\theta \in [0, \frac{1}{2})$ which controls the desired margin of the output hypothesis $h$. The goal of SmoothBoost is to output a hypothesis $h: \mathcal{X} \to \{-1, 1\}$ with small empirical error (and as we shall see later, for sufficiently large $m$, this $h$ will also have large generalization error). The final $h$ is going to be the sign of a sum of elements of $\mathcal{H}_{weak}$, so the strong learner's hypothesis class is larger than that of the weak learner.

\begin{algorithm}[!h]
\caption{SmoothBoost}\label{alg:smoothboost}
\begin{algorithmic}[1]

\Require A $\gamma$-weak learner $\mathcal{W}$ with complexity $W$.

A sample $S = ((x_1, y_1), \ldots, (x_m, y_m)) \in (\mathcal{X} \times \{-1, 1\})^m$. 

Parameters $\kappa \in (0,1), \theta \in [0, \frac{1}{2})$.
\vspace{0.15cm}
\Ensure Hypothesis $h: \mathcal{X} \to \{-1,1\}$.

\vspace{0.15cm}

\Function{SmoothBoost}{$\mathcal{W}$, $S$, $\kappa$, $\theta$}

\State For all $i \in [m]$ initialize: $N_i^0 \gets 0, \; M_i^1 \gets 1$.

\State $t \gets 1$ 

\While{true}
 \State\label{smoothboost:approxsum} Compute $s=\sum_{i=1}^m M^t_i$.

\State\label{smoothboost:condition} If $s<\kappa m$ then $T \gets t-1$, \textbf{return} $h\gets sign(\sum_{t=1}^T h_t)$, and \textbf{terminate}.

\State\label{smoothboost:prepareexamples} Prepare $W$ i.i.d.\ examples w.r.t.\ distribution $D^t=M^t/\sum_i M^t_i$ (see Footnote~\ref{footnote:rejectionsampling}).

\State\label{smoothboost:weaklearner} 
Feed those $W$ examples into the weak learner $\mathcal{W}$ to obtain base classifier $h_t$. 
\State\label{smoothboost:weightupdate} For all $i \in [m]$ set
$N^t_i \gets N^{t-1}_i + h_t(x_i)y_i - \theta$
and
$
M^{t+1}_i \gets
\begin{cases}
1 & \text{for } N^t_i < 0 \\ 
(1 - \gamma)^{\frac{N^t_i}{2}} & \text{for } N^t_i \geq 0 
\end{cases}$

\State $t\gets t+1$.
\EndWhile
\EndFunction
\end{algorithmic}
\end{algorithm}

The central objects in this algorithm are the vectors $M^1,M^2,\ldots,M^T\in[0,1]^m$, which are unnormalized distributions over the $m$ examples. 
The distribution~$D^t$ is the normalized version of $M^t$.
SmoothBoost starts by initializing weights to $N^0_i = 0$ and $M^1_i = 1$, for all $i \in [m]$, so $D^1$ is uniform.
In each iteration, Step~\ref{smoothboost:condition} checks whether the sum of $M^t_i$ is below $\kappa m$, and if so it terminates.
Otherwise it runs the weak learner on $W$ i.i.d.\ examples sampled (from $S$) according to $D^t$,
producing a base classifier $h_t:\mathcal{X} \to \{-1,1\}$. Step~\ref{smoothboost:weightupdate} updates $N^{t-1}$ to $N^t$ and $M^t$ to $M^{t+1}$. For each $i \in [m]$, $N^t_i$ is the cumulative amount by which hypotheses $h_1, \ldots, h_t$ beat the desired margin $\theta$. If $x_i$ got correctly classified by $h_t$, then it will get higher weight $N^t_i$, which results in smaller weight $M^{t+1}_i$ and smaller probability $D^{t+1}_i$ in the next round of boosting. This mechanism forces the next run of the weak learner $\mathcal{W}$ to ``zoom in'' (i.e., assign higher probabilities) to systematically misclassified instances. The procedure terminates if the sum of all weights $\sum_{i \in [m]} M^t_i$ gets sufficiently small, as controlled by the parameter $\kappa$.

We now state three results from Servedio~\cite{servedio2003smooth} that show, respectively, that the intermediate distributions are smooth, that SmoothBoost terminates after a small number of iterations, and that it returns a hypothesis with low empirical error.

\begin{claim}[Lemma~1 of \cite{servedio2003smooth}]\label{smoothboost:smooth}
For each $1 \leq t \leq T$, it holds that $\max_{i \in m} | D^t_i | \leq \frac{1}{\kappa m}$.
\end{claim}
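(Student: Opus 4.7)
The plan is to prove the bound by combining two observations: the weights $M^t_i$ are each at most $1$, and the normalization $\sum_i M^t_i$ is at least $\kappa m$ whenever the algorithm reaches iteration $t$ without terminating.

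First I would establish that $M^t_i \leq 1$ for every $t \geq 1$ and $i \in [m]$. This is immediate from the update rule in Step~\ref{smoothboost:weightupdate}. At initialization, $M^1_i = 1$. Thereafter, $M^{t+1}_i$ is defined as either $1$ (when $N^t_i < 0$) or $(1-\gamma)^{N^t_i/2}$ (when $N^t_i \geq 0$). Since $\gamma \in (0,1/2)$ and $N^t_i \geq 0$ in the second case, $(1-\gamma)^{N^t_i/2} \leq 1$. Hence $M^t_i \in (0,1]$ for every $t$ and $i$, so in particular the numerator of $D^t_i$ is at most $1$.

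Next I would use the termination condition in Step~\ref{smoothboost:condition}. The algorithm reaches iteration $t$ (and computes the distribution $D^t$) only if at Step~\ref{smoothboost:condition} we had $s = \sum_{i=1}^m M^t_i \geq \kappa m$; otherwise the algorithm would have terminated at iteration $t$ without defining $D^t$. Therefore for every $t$ with $1 \leq t \leq T$,
\[
\sum_{i=1}^m M^t_i \;\geq\; \kappa m.
\]

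Combining the two bounds yields, for every $i \in [m]$,
\[
D^t_i \;=\; \frac{M^t_i}{\sum_{j=1}^m M^t_j} \;\leq\; \frac{1}{\kappa m},
\]
which is the desired smoothness guarantee. There is essentially no hard step here: the claim is a structural consequence of how the termination threshold is set relative to the trivial upper bound on each weight, and this is precisely why Servedio chose the update rule to cap $M^t_i$ at $1$ rather than let it grow. The only thing to be careful about is confirming that the loop-termination condition ensures the lower bound on $\sum_i M^t_i$ holds at every iteration that actually defines a distribution $D^t$.
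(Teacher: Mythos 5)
Your proof is correct and follows exactly the same route as the paper's: bound each $M^t_i$ by $1$ via the update rule, lower-bound the normalizer by $\kappa m$ via the termination condition of Step~\ref{smoothboost:condition}, and divide. Nothing is missing.
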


\begin{proof}
This follows from the condition of Step~\ref{smoothboost:condition}: before termination we have $\sum_{i=1}^m M_i^t\geq\kappa m$ and $M^t_i\in[0,1]$, hence $D^t_i=M_i^t/\sum_{i\in[m]} M_i^t\leq 1/\kappa m$ for all $i\in[m]$.
\end{proof}

\begin{claim}[Theorem~3 of \cite{servedio2003smooth}]\label{smoothboost:runningtime}
If $\theta = \frac{\gamma}{2+\gamma}$ and for all $t$ it holds that $\Pr_{i\sim D^t}[h_t(x_i)\neq y_i] \leq \frac{1}{2} - \gamma$, then SmoothBoost terminates with $T < \frac{2}{\kappa \gamma^2 \sqrt{1 - \gamma}}$ iterations.  
\end{claim}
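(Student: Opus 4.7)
The plan is a potential-function argument on $\Phi^t := \sum_{i=1}^m M^t_i$, the quantity that Step~\ref{smoothboost:condition} checks for termination. I want to show that $\Phi^t$ cannot stay $\ge \kappa m$ for more than the claimed number of rounds.

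First I will translate the weak learner's guarantee into a quantitative advantage. Since labels are in $\{-1,1\}$, $\Pr_{i\sim D^t}[h_t(x_i)\ne y_i]\le 1/2-\gamma$ is equivalent to $\mathbb{E}_{i\sim D^t}[h_t(x_i)y_i]\ge 2\gamma$, i.e.\ $\sum_{i=1}^m M^t_i\, h_t(x_i) y_i \ge 2\gamma\, \Phi^t$. As long as the algorithm has not terminated we have $\Phi^t\ge \kappa m$, so summing over $t = 1,\ldots,T$ gives the cumulative lower bound
\[
\sum_{t=1}^T \sum_{i=1}^m M^t_i\bigl(h_t(x_i) y_i - \theta\bigr) \;\ge\; (2\gamma-\theta)\,\kappa m\, T.
\]

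Next I will derive a matching upper bound on this same quantity by analyzing the per-round evolution of the weights. The key algebraic tool is the identity $(1-\gamma)^{z/2} = A + Bz$ valid for $z\in\{-1,1\}$, where $A = (2-\gamma)/(2\sqrt{1-\gamma})$ and $B = -\gamma/(2\sqrt{1-\gamma})$. Together with $N^t_i - N^{t-1}_i = h_t(x_i) y_i - \theta$ and $M^{t+1}_i = \min\bigl(1, (1-\gamma)^{N^t_i/2}\bigr)$, this lets me expand the change $M^{t+1}_i - M^t_i$ in the unclipped regime as a linear function of $h_t(x_i)y_i$. Splitting the examples each round according to whether $N^{t-1}_i\ge 0$ (unclipped) or $N^{t-1}_i<0$ (clipped to $1$), summing over $i$, and invoking the advantage from the first step should yield an upper bound of the form $\frac{2m}{\gamma\sqrt{1-\gamma}}$ (up to constants) on the cumulative quantity above. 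The specific choice $\theta = \gamma/(2+\gamma)$ is engineered so that the leading coefficients in this expansion line up cleanly; with it, combining the two bounds gives $T < \frac{2}{\kappa\gamma^2\sqrt{1-\gamma}}$.

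The main obstacle is the clipping of $M^t_i$ at $1$: whenever $N^{t-1}_i<0$, the weight does not decrease even when $h_t$ classifies $i$ correctly, at least until the margin $N^t_i$ climbs back above zero. This breaks the clean multiplicative recursion that the identity would otherwise give. Handling it requires showing that the positive contributions from these clipped examples (and from misclassified examples whose margin is still positive) are dominated by the negative contributions from correctly-classified unclipped examples, which is precisely where the weak-learner advantage and the exact value of $\theta = \gamma/(2+\gamma)$ combine.
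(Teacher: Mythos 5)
Your skeleton is the right one, and it is the same one the paper relies on: lower-bound the cumulative quantity $\sum_{t=1}^T\sum_{i=1}^m M_i^t\bigl(h_t(x_i)y_i-\theta\bigr)$ using the weak-learning advantage together with $\sum_i M_i^t\ge\kappa m$ before termination, upper-bound the same quantity by $\frac{2m}{\gamma\sqrt{1-\gamma}}$, and divide. Your lower bound is correct and complete (and since $2\gamma-\theta=\gamma\frac{3+2\gamma}{2+\gamma}>\gamma$, it even has slack relative to the $\gamma\sum_t\sum_i M_i^t$ form the paper quotes). The final division step is exactly the paper's.

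The gap is the upper bound. The paper does not prove it either --- it imports it wholesale from Servedio's Lemmas~4 and~5 and explicitly calls it ``the hard part of Servedio's correctness proof'' --- and your proposal likewise stops short of proving it. You name the correct ingredients (the exact linearization of $(1-\gamma)^{z/2}$ over $z\in\{-1,1\}$, a case split on whether $N_i^{t-1}\ge 0$) and you correctly identify the obstacle (clipped weights $M_i^t=1$ do not decrease when example $i$ is classified correctly, breaking the multiplicative recursion), but the resolution is asserted (``should yield an upper bound,'' ``requires showing that the positive contributions \ldots are dominated'') rather than carried out. That domination argument \emph{is} Servedio's Lemma~5, and it is where the specific constant $\frac{2}{\gamma\sqrt{1-\gamma}}$ and the specific choice $\theta=\frac{\gamma}{2+\gamma}$ actually get used; ``up to constants'' will not recover the stated bound $T<\frac{2}{\kappa\gamma^2\sqrt{1-\gamma}}$ without tracking them. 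Two smaller cautions: the per-round exponent increment is $h_t(x_i)y_i-\theta\in\{-1-\theta,\,1-\theta\}$, not $\{-1,1\}$, so your two-point identity needs the $\theta$-shifted coefficients (harmless, since any function of a $\{\pm1\}$-valued variable is affine in it, but your stated $A,B$ are for $\theta=0$); and in Servedio's argument the weak-learner advantage enters only the lower bound, whereas your sketch invokes it again inside the upper bound --- that is at least a structural deviation you would need to justify. To make the proof self-contained you must either execute the potential argument in full or, as the paper does, cite Servedio's Lemmas~4 and~5 for the inequality $\frac{2m}{\gamma\sqrt{1-\gamma}}>\gamma\sum_{t=1}^T\sum_{i=1}^m M_i^t$.
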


\begin{proof}
\cite[Lemmas~4 and~5]{servedio2003smooth}
imply $\frac{2m}{\gamma\sqrt{1-\gamma}}>\gamma\sum_{t=1}^T\sum_{i=1}^m M_i^t$ (this is the hard part of Servedio's correctness proof of SmoothBoost). We have $\sum_{i=1}^m M_i^t\geq \kappa m$ for all $t$ until termination. Hence $\frac{2m}{\gamma\sqrt{1-\gamma}}>\gamma T \kappa m$, which implies the claim.
\end{proof}

\begin{claim}[Theorem~2 of \cite{servedio2003smooth}]\label{smoothboost:empiricalerror}
After $t$ iterations of SmoothBoost, the hypothesis $h=sign(\sum_{t=1}^t h_t)$ has empirical error $\hat{err}(h) \leq \sum_{i=1}^m M^{t+1}_i/m$.
\end{claim}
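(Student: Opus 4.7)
The plan is to show that each misclassified example $i$ contributes a weight $M_i^{t+1} = 1$ to the sum, so that the count of misclassifications (and hence $m$ times the empirical error) is bounded by the total weight $\sum_i M_i^{t+1}$. The whole argument is driven by the definition of $N_i^t$ as a cumulative margin and the structure of the weight-update rule.

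First I would unfold $N_i^t$. By the update in Step~\ref{smoothboost:weightupdate},
\[
N_i^t \;=\; \sum_{s=1}^{t}\bigl(h_s(x_i)\,y_i - \theta\bigr) \;=\; y_i\sum_{s=1}^{t} h_s(x_i) \;-\; t\theta.
\]
Next I would translate the event $h(x_i)\neq y_i$ (where $h=\mathrm{sign}(\sum_s h_s)$) into a statement about $N_i^t$. Since the labels are in $\{-1,1\}$, a misclassification means $y_i$ and $\mathrm{sign}(\sum_s h_s(x_i))$ have opposite signs, hence $y_i\sum_{s=1}^{t}h_s(x_i)\le 0$ (independent of whatever tie-breaking convention is used for $\mathrm{sign}(0)$, since a strictly positive product would force agreement). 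Combined with $\theta\ge 0$, this gives $N_i^t\le -t\theta\le 0$.

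Now I would invoke the piecewise definition of $M_i^{t+1}$: for $N_i^t<0$ the rule sets $M_i^{t+1}=1$, and for $N_i^t=0$ the formula $(1-\gamma)^{N_i^t/2}$ also equals $1$. Either way, every misclassified example satisfies $M_i^{t+1}=1$. Therefore
\[
|\{i\in[m] : h(x_i)\neq y_i\}| \;=\;\sum_{i:\,h(x_i)\neq y_i} 1 \;=\;\sum_{i:\,h(x_i)\neq y_i} M_i^{t+1} \;\le\; \sum_{i=1}^{m} M_i^{t+1},
\]
since every $M_i^{t+1}$ is nonnegative. Dividing by $m$ yields the claimed bound
$\hat{err}(h)\le \sum_{i=1}^m M_i^{t+1}/m$.

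There is really no hard step here: the entire argument is a one-line consequence of the update rule, once one notices that the cumulative margin $N_i^t$ is non-positive exactly in the bad cases and that the weight-update is designed to output $1$ there. The only thing one needs to be careful about is tie-breaking in the sign function, and this is handled by the inequality $y_i\sum_s h_s(x_i)\le 0$ (with non-strict inequality), which suffices because $\theta\ge 0$.
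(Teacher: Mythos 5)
Your proof is correct and follows essentially the same route as the paper: misclassification forces the cumulative margin $N_i^t$ to be non-positive, hence $M_i^{t+1}=1$, so the number of errors is dominated by $\sum_i M_i^{t+1}$. If anything, you are slightly more careful than the paper about the tie case $\sum_s h_s(x_i)=0$ (where the paper's ``errs iff $\sum_t h_t(x_i)y_i<0$'' glosses over the sign convention), but this is a cosmetic difference, not a different argument.
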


\begin{proof}
Note that $N_i^T=\sum_{t=1}^T (h_t(x_i)y_i-\theta)$. Hence if $i$ is such that $\sum_{t=1}^T h_t(x_i)y_i<\theta T$, then $N_i^T<0$ and $M^{T+1}_i=1$.
The final hypothesis $h$ errs on the $i$th example iff $\sum_{t=1}^T h_t(x_i)y_i<0$.
We upper bound the number of $i\in[m]$ for which this happens:
\[
\left|\left\{i\mid \sum_{t=1}^T h_t(x_i)y_i<0\right\}\right|
\leq \left|\left\{i\mid \sum_{t=1}^T h_t(x_i)y_i<\theta T\right\}\right|
=\sum_{i:\sum_{t=1}^T h_t(x_i)y_i<\theta T}M_i^{T+1}
\leq\sum_{i=1}^m M_i^{T+1}.
\]
\end{proof}

\noindent
Since SmoothBoost terminates if $\sum_{i=1}^m M_i^{T+1}<\kappa m$, Claim~\ref{smoothboost:empiricalerror} implies that the empirical error of the final hypothesis is $<\kappa$.

Combining the previous two claims, we see that the empirical error decreases like $O(1/(T\gamma^2))$. This contrasts with AdaBoost, where the empirical error goes down exponentially fast in $T$ and hence can be driven down to $<1/m$ (and hence to~0) quite cheaply.
SmoothBoost does not drive the empirical error down all the way to~0, because that would require setting $\kappa<1/m$ which implies a very large number of iterations, $T=O(m/\gamma^2)$. However, small but non-zero empirical error is good enough for our purposes, because (with sufficiently large sample size~$m$) that already implies small generalization error.

We will choose $\kappa$ to be $\eps/2$, which sets the above upper bound on the empirical error of the final hypothesis~$h$ to half of the allowed generalization error. By the discussion following Theorem~\ref{thm:generalizationbound}, if the sample size~$m$ is large enough, the final hypothesis will have generalization error $\leq\eps$, as desired. The required~$m$ depends on the VC-dimension of the hypothesis class $\mathcal{H}_{strong}$ of SmoothBoost, which consists of signs of sums of $T$ elements of the hypothesis class $\mathcal{H}_{weak}$ of the weak learner. The VC-dimensions of these two classes are related as follows:

\begin{claim}[Shalev-Shwartz \& Ben-David, p.~109 \cite{ShalevShwartz2014UnderstandingML}]\label{claim:adaboostvc}
Let $\mathcal{H}_{weak}$ be a hypothesis class of VC-dimension $d$ and $\mathcal{H}_{strong}=\{sign(\sum_{i=1}^T h_i)\mid h_1,\ldots,h_T\in\mathcal{H}_{weak}\}$.
Then the VC-dimension of $\mathcal{H}_{strong}$ is $D=O(Td\log(Td))$.
\end{claim}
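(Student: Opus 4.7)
The plan is a standard Sauer--Shelah style argument. First I would bound the growth function of $\mathcal{H}_{weak}$: by the Sauer--Shelah lemma, for any set of $n$ points the number of distinct labelings induced by $\mathcal{H}_{weak}$ is at most $(en/d)^d$ (assuming $n \geq d$).

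Next I would observe that any $g \in \mathcal{H}_{strong}$ is determined, as a function on a fixed set of $n$ points, by the $T$-tuple of labelings that the chosen base classifiers $h_1, \ldots, h_T$ induce on those points (since $g$ on a point $x_i$ is just the sign of the sum of the coordinates of that tuple). Therefore the growth function satisfies
\[
\Pi_{\mathcal{H}_{strong}}(n) \;\leq\; \Pi_{\mathcal{H}_{weak}}(n)^T \;\leq\; \left(\frac{en}{d}\right)^{dT}.
\]

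Now suppose $\mathcal{H}_{strong}$ shatters a set of size $n$. Then $2^n \leq \Pi_{\mathcal{H}_{strong}}(n)$, so
\[
n \;\leq\; dT \log_2\!\left(\frac{en}{d}\right).
\]
Writing $A = dT$ and $u = n/d$, this becomes $u \leq T \log_2(eu)$, an inequality of the form $u \leq c \log u$ with $c = \Theta(T)$; the final step is to solve it. Using the elementary fact that $u \leq c \log_2(eu)$ implies $u = O(c \log c)$ (plug $u = 2c \log_2 c$ back into the inequality to verify it eventually dominates), we conclude $n/d = O(T \log(Td))$, i.e., $n = O(Td \log(Td))$. Since every shattered set has size at most this bound, the VC-dimension $D$ of $\mathcal{H}_{strong}$ is $O(Td \log(Td))$.

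The only technical step is the last inequality-solving manipulation, but it is completely routine; the conceptual core is the observation in the second paragraph that the labeling of $n$ points by any member of $\mathcal{H}_{strong}$ factors through $T$-tuples of labelings by $\mathcal{H}_{weak}$, which turns a VC-dimension question about $\mathcal{H}_{strong}$ into a growth-function bound for $\mathcal{H}_{weak}$.
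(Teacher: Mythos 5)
Your proof is correct. The paper does not actually prove this claim itself but defers to Shalev-Shwartz and Ben-David (their Lemma~10.3), whose argument is the same Sauer--Shelah growth-function calculation you give: bound $\Pi_{\mathcal{H}_{strong}}(n)$ by $\Pi_{\mathcal{H}_{weak}}(n)^T$, compare with $2^n$, and solve the resulting $n \leq dT\log_2(en/d)$; the textbook version handles arbitrary real weights $\alpha_t$ and therefore needs an additional count of halfspace dichotomies in $\mathbb{R}^T$, a step your unit-weight setting lets you skip entirely.
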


By Theorem~\ref{thm:generalizationbound} and the fact that $T=O(\frac{1}{\eps\gamma^2})$ it thus suffices to take 
\begin{equation}\label{eq:mbound}
m=O\left(\frac{D\log(D/(\delta\eps))+\log(1/\delta)}{\eps^2}\right)
=O\left(\frac{d\log(d/(\delta\eps\gamma))^2}{\eps^3\gamma^2}+\frac{\log(1/\delta)}{\eps^2}\right)
\end{equation}
examples in order to be able to infer (with success probability $\geq 1-\delta$) generalization error $\leq\eps$ from empirical  error $\leq\eps/2$.

Finally, let us determine the complexity of SmoothBoost, in terms of the overall number of elementary operations and queries to the sample and to the $h_t$. There are $T=O(\frac{1}{\eps\gamma^2})$ iterations. Each iteration involves one application of the weak learner $\mathcal{W}$, and $\tilde{O}(m)$ other operations. The weak learner needs to be fed $W$ examples sampled according to distribution $D^t$. Using rejection sampling, we can generate $W$ such examples at cost $O(W/\kappa)=O(W/\eps)$.\footnote{Specifically, Step~\ref{smoothboost:prepareexamples} of SmoothBoost can be implemented as follows.
Sample $i\in[m]$ uniformly. With probability $M_i^t$ output $(x_i,y_i)$, and otherwise repeat. Since the probability to output $(x_i,y_i)$ is proportional to $M_i^t$, the example (if we indeed output an example) is sampled according to the desired probability distribution~$D^t$. Note that the probability that we output an example in one try is $\frac{1}{m}\sum_i M_i^t \geq \kappa$, because of the condition of Step~\ref{smoothboost:condition}. Hence the expected number of repetitions before we output an example is $\leq 1/\kappa$.\label{footnote:rejectionsampling}}

\begin{theorem}
Let $\mathcal{W}$ be a $\gamma$-weak learner of complexity $W$ for concept class $\mathcal{C}$, with hypothesis class $\mathcal{H}_{weak}$ of VC-dimension~$d$. Then given $m$ examples according to Eq.~\eqref{eq:mbound}, SmoothBoost is an $(\eps,\delta)$-PAC learner for $\mathcal{C}$, with hypothesis class $\mathcal{H}_{strong}$. It runs the weak learner $O(\frac{1}{\eps\gamma^2})$ times and uses 
\[
\tilde{O}(T(W/\eps+m))
=\tilde{O}\left(\frac{W}{\eps^2\gamma^2} + \frac{m}{\eps\gamma^2}\right)
=\tilde{O}\left(\frac{W}{\eps^2\gamma^2} + \frac{d}{\eps^4\gamma^4}\right)
\]
other operations (elementary computational steps, queries to the sample, and evaluations of base classifiers).
\end{theorem}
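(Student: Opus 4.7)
The plan is to stitch together the three structural claims about SmoothBoost (smoothness, iteration count, empirical error) with the VC generalization bound and a per-iteration cost analysis.

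First I would fix the parameter choices $\kappa=\eps/2$ and $\theta=\gamma/(2+\gamma)$. Because $\mathcal{W}$ is assumed to be a $(1/2-\gamma,0)$-PAC learner, every call returns an $h_t$ with $\Pr_{i\sim D^t}[h_t(x_i)\neq y_i]\leq 1/2-\gamma$ with certainty, so the hypothesis of Claim~\ref{smoothboost:runningtime} is satisfied in every iteration. That claim then bounds the number of iterations by $T\leq 2/(\kappa\gamma^2\sqrt{1-\gamma})=O(1/(\eps\gamma^2))$. When the loop terminates, Step~\ref{smoothboost:condition} guarantees $\sum_{i=1}^m M_i^{T+1}<\kappa m$, and Claim~\ref{smoothboost:empiricalerror} translates this into empirical error $\hat{err}(h)<\kappa=\eps/2$ for the final hypothesis $h=sign(\sum_{t=1}^T h_t)$.

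Next I would upgrade this empirical-error statement to a generalization-error statement. Since $h\in\mathcal{H}_{strong}$, Claim~\ref{claim:adaboostvc} gives VC-dimension $D=O(Td\log(Td))=\tilde{O}(d/(\eps\gamma^2))$. Plugging $D$ and $\eta=\eps/2$ into Theorem~\ref{thm:generalizationbound} and choosing $m$ as in Eq.~\eqref{eq:mbound} ensures that, except with probability $\delta$, every $h\in\mathcal{H}_{strong}$ has generalization error at most $\eps/2$ above its empirical error; combined with the previous paragraph this yields $err(h,f,\mathcal{D})\leq\eps$ with probability $\geq 1-\delta$, establishing the PAC-learning conclusion.

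Finally, I would account for the cost per iteration. Step~\ref{smoothboost:approxsum} and Step~\ref{smoothboost:weightupdate} each touch all $m$ weights, contributing $O(m)$ operations. Step~\ref{smoothboost:weaklearner} invokes $\mathcal{W}$ once at cost $W$, but must first be supplied with $W$ i.i.d.\ examples from $D^t$ in Step~\ref{smoothboost:prepareexamples}. For this I invoke the rejection-sampling scheme of Footnote~\ref{footnote:rejectionsampling}: by Claim~\ref{smoothboost:smooth} together with the termination condition, each attempt succeeds with probability $\geq\kappa=\eps/2$, so producing $W$ examples costs $O(W/\eps)$ in expectation. Summing over $T=O(1/(\eps\gamma^2))$ iterations gives $\tilde{O}(T(W/\eps+m))=\tilde{O}(W/(\eps^2\gamma^2)+m/(\eps\gamma^2))$, and substituting the bound on $m$ from Eq.~\eqref{eq:mbound} yields the advertised $\tilde{O}(W/(\eps^2\gamma^2)+d/(\eps^4\gamma^4))$. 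No step here is genuinely hard; the only bookkeeping to watch is that the VC-dimension entering the generalization bound is that of $\mathcal{H}_{strong}$ (so $T$ reappears in $m$), which is what produces the extra $1/(\eps\gamma^2)$ factor over the raw weak-learner class.
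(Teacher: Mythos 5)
Your proposal is correct and follows essentially the same route as the paper: combine Claims~\ref{smoothboost:runningtime} and~\ref{smoothboost:empiricalerror} with $\kappa=\eps/2$ to get empirical error $<\eps/2$ in $T=O(1/(\eps\gamma^2))$ iterations, lift to generalization error via Claim~\ref{claim:adaboostvc} and Theorem~\ref{thm:generalizationbound}, and add up $W+W/\eps+\tilde{O}(m)$ per iteration. One tiny nit: the success probability $\geq\kappa$ of each rejection-sampling attempt follows from the termination condition $\sum_i M_i^t\geq\kappa m$ alone, not from the smoothness statement of Claim~\ref{smoothboost:smooth}.
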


\section{Quantum Smooth Boosting}\label{sec:qsmoothBoost}

In this section we introduce our quantum version of SmoothBoost. The algorithm is given query access to a quantum (or classical) weak learner $\mathcal{W}$ with sample complexity $W$, and to a sample $S$ of size~$m$. 
The quantum weak learner needs to be fed \emph{quantum} examples according to the distribution $D^t$ obtained by normalizing the weight-vector~$M^t$. We will start with that.

\subsection{Preparing quantum examples}\label{ssec:prepqexample}

Here we show how we can efficiently prepare quantum examples w.r.t.\ the distribution $D^t$ induced by the non-normalized $M^t$, thanks to its smoothness.
This may be viewed as a quantum analogue of the classical rejection sampling sketched in Footnote~\ref{footnote:rejectionsampling}.

\begin{theorem}\label{th:qexample}
Suppose we have query access to the $m$ numbers $M_1,\ldots,M_m\in[0,1]$.
Let $s=\sum_{i=1}^m M_i$ be their (unknown) sum, which has a known lower bound of $\kappa m$. 
Define a probability distribution $D$ on $[m]$ by $D_i=M_i/s$.
Then using an expected number of $O(1/\sqrt{\kappa})$ queries and $\tilde{O}(1/\sqrt{\kappa})$ other gates, we can  prepare the state
\[
\sum_{i=1}^m  \sqrt{D_i} \ket{i}.
\]
\end{theorem}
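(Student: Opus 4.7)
The plan is to prepare the desired state using amplitude amplification on top of a one-shot state-preparation routine that succeeds with probability at least $\kappa$. This will turn the naive $\Theta(1/\kappa)$ rejection-sampling cost into $O(1/\sqrt{\kappa})$ via quadratic Grover-style speedup.

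First I would build the base unitary $U$ acting on $\lceil\log m\rceil + 1$ qubits (plus some ancillae). Starting from $\ket{0}$, create the uniform superposition $\frac{1}{\sqrt{m}}\sum_{i=1}^m \ket{i}$, then query the oracle to write $M_i$ into an ancillary register, and use a controlled rotation to put amplitude $\sqrt{M_i}$ on a flag qubit being $0$. After uncomputing the ancilla holding $M_i$, the result is
\[
U\ket{0}=\frac{1}{\sqrt{m}}\sum_{i=1}^m \ket{i}\bigl(\sqrt{M_i}\ket{0}+\sqrt{1-M_i}\ket{1}\bigr).
\]
Rewriting this in the form required by Theorem~\ref{th:amplampl}, we get $U\ket{0}=\sqrt{a}\,\ket{\phi_0}\ket{0}+\sqrt{1-a}\,\ket{\phi_1}\ket{1}$ with $a=s/m$ and, crucially,
\[
\ket{\phi_0}=\frac{1}{\sqrt{s}}\sum_{i=1}^m \sqrt{M_i}\,\ket{i}=\sum_{i=1}^m\sqrt{D_i}\,\ket{i},
\]
which is exactly the state we want. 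Since $s\geq\kappa m$, we know $a\geq\kappa$, so we may use $a'=\kappa$ as the promised lower bound.

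Next I would invoke Theorem~\ref{th:amplampl} with this $U$ and $a'=\kappa$, obtaining a unitary $V$ that uses $O(1/\sqrt{\kappa})$ applications of $U$ and $U^\dagger$ and $\tilde O(1/\sqrt{\kappa})$ other gates and produces $V\ket{0}=\sqrt{b}\,\ket{\phi_0}\ket{0}+\sqrt{1-b}\,\ket{\phi_1}\ket{1}$ with $b\geq 1/2$. Measuring the flag qubit, we obtain $\ket{\phi_0}=\sum_i\sqrt{D_i}\ket{i}$ whenever the outcome is $0$, which happens with probability $\geq 1/2$. If the outcome is $1$, we discard and rerun $V$. The expected number of repetitions is at most $2$, so the total expected cost is still $O(1/\sqrt{\kappa})$ queries and $\tilde O(1/\sqrt{\kappa})$ other gates.

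The only delicate step is implementing the controlled rotation $\ket{M_i}\ket{0}\mapsto \ket{M_i}(\sqrt{M_i}\ket{0}+\sqrt{1-M_i}\ket{1})$. This is a textbook subroutine: compute $\arcsin(\sqrt{M_i})$ to enough bits of precision in an ancilla using standard reversible arithmetic, apply a sequence of controlled $Y$-rotations by the bits of that value, and uncompute. The precision needed is only polylogarithmic, which keeps the per-iteration gate cost polylogarithmic and contributes only to the $\tilde O(\cdot)$ factor in the gate count. Apart from this bookkeeping, no further obstacles arise, and the theorem follows.
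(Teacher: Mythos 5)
Your proposal is correct and follows essentially the same route as the paper's own proof: prepare the uniform superposition, query and rotate to load $\sqrt{M_i}$ onto a flag qubit, amplify the flag-$\ket{0}$ branch from amplitude $s/m\geq\kappa$ up to at least $1/2$ via Theorem~\ref{th:amplampl}, then measure and repeat on failure (expected at most twice). The only difference is that you spell out the $\arcsin(\sqrt{M_i})$ rotation implementation in more detail than the paper does, which is a harmless addition.
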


\begin{proof}
Start by preparing the uniform state
\[
\frac{1}{\sqrt{m}}\sum_{i=1}^m\ket{i}\ket{0}.
\]
Using two queries (the second to uncompute the value $M_i$), and a few other gates to implement a conditional rotation by angle $\arcsin(\sqrt{M_i})$, prepare
\[
\frac{1}{\sqrt{m}}\sum_{i=1}^m\ket{i} (\sqrt{M_i}\ket{0} + \sqrt{1 - M_i}\ket{1}).
\]
The squared norm of the part of the state ending in $\ket{0}$ is $s/m\geq \kappa$. Now use $O(1/\sqrt{\kappa})$ rounds of amplitude amplification (Theorem~\ref{th:amplampl}) to increase that squared norm to $\geq 1/2$. 
This costs $O(1/\sqrt{\kappa})$ queries and $\tilde{O}(1/\sqrt{\kappa})$ other gates.

If we measure the last qubit of the resulting state, then we obtain outcome~0 with probability $\geq 1/2$ and the state collapses to the state that we want to prepare (with an extra $\ket{0}$-qubit that we can remove).
Note that we know when we succeed to produce the desired state. Since the probability of success is $\geq 1/2$, the expected number of repetitions before success is $\leq 2$.
\end{proof}

\noindent
Once we have produced a copy of the state
\[
\sum_{i=1}^m \sqrt{D_i} \ket{i}.
\]
we can easily convert this into a quantum example
\[
\sum_{i=1}^m \sqrt{D_i} \ket{x_i,y_i}.
\]
by querying the sample~$S$.

\subsection{Quantizing SmoothBoost}

The pseudocode of Quantum Smooth Boosting is given in Algorithm~\ref{alg:quantumsmoothboost}. The algorithm receives as input a weak quantum learner $\mathcal{W}$ with sample complexity $W$, and query access to a sample $S$ of $m$ examples. Additionally, it receives two parameters $\kappa,\theta$. 

The algorithm looks a bit different from classical SmoothBoost because it doesn't update the $m$-dimensional vectors $N^t_i$ and $M^t_i$ explicitly anymore; the $O(m)$ that this costs is more than we are willing to spend in the quantum case.
Instead, we will store the earlier base classifiers $h_1,\ldots,h_t$. Queries to these classifiers together with queries to the sample~$S$ allow us to calculate each entry $N^t_i$ and $M^t_i$ on demand in time $\tilde{O}(t)$, via the formulas of Step~\ref{smoothboost:weightupdate} of SmoothBoost. 

The algorithm begins by initializing $N^0$, $M^1$ and by setting $t = 1$. Like in the classical case, we iterate until the sum of the weights $\sum_{i \in[m]} M^t_i$ becomes small enough. In contrast to the classical case, we do not have the time to sum these $m$ numbers exactly, so we will instead estimate the sum with small approximation error using quantum counting (Theorem~\ref{th:approxcount}).

\begin{algorithm}[hbt]
\caption{Quantum SmoothBoost}\label{alg:quantumsmoothboost}
\begin{algorithmic}[1]

\Require A $\gamma$-weak quantum learner $\mathcal{W}$ with complexity $W$.

A sample $S = ((x_1, y_1), \ldots, (x_m, y_m)) \in (\mathcal{X}\times \{-1, 1\})^m$. 

Parameters $\kappa \in (0,1), \theta \in [0, \frac{1}{2})$.
\vspace{0.15cm}
\Ensure Hypothesis $h: \mathcal{X} \to \{-1,1\}$.

\vspace{0.15cm}
\Function{QuantumSmoothBoost}{$\mathcal{W}$, $S$, $\kappa$, $\theta$}
\State $t \gets 1$

\While{true}
 \State\label{quantumsmoothboost:approxsum} Compute an estimate $\tilde{s}$ of $s=\sum_{i=1}^m M^t_i$ with multiplicative error 1.1 (using Theorem~\ref{th:approxcount}),
 \hspace*{3.5em}where the $M^t_i$ are as defined in Step~\ref{smoothboost:weightupdate} of SmoothBoost (and only computed on demand).

\State\label{quantumsmoothboost:terminate} If $\tilde{s}<\kappa m$ then $T \gets t-1$, \textbf{return} $h\gets sign(\sum_{t=1}^T h_t)$, and \textbf{terminate}.


\State\label{quantumsmoothboost:preparestate} Prepare $W$ copies of example $\ket{D^t}$ w.r.t.\ distribution 
$D^t_i = \frac{M^t_i}{\sum_{i \in[m]} M^t_i}$ (using Theorem~\ref{th:qexample}).

\State\label{quantumsmoothboost:quantumweaklearner} Feed those $W$ examples into the weak learner $\mathcal{W}$ to obtain base classifier $h_t$. 

\State $t\gets t+1$.
\EndWhile
\EndFunction
\end{algorithmic}
\end{algorithm}

Quantum Smoothboost runs $O(TW)$ quantum subroutines that each have some error probability. By setting this error probability to be $\ll 1/TW$, the union bound implies that the probability that at least one of them will fail, is very small. The extra cost-factor $\log(TW)$ that this error-reduction incurs will be absorbed by our $\tilde{O}(\cdot)$ notation.

If we condition on the very-high-probability event that the various quantum subroutines involved all succeed, then the weights $N^t$ and $M^t$ are just equal to the weights as they would be in classical SmoothBoost with the same number of iterations.  Because our approximation $\tilde{s}$ of $s$ for the stopping criterion has small multiplicative error, the smoothness of the intermediate distributions $D^t$ before termination can be marginally worse than in classical SmoothBoost (Claim~\ref{smoothboost:smooth}):
For each $1 \leq t \leq T$, it holds that $\max_{i \in m} | D^t_i | \leq \frac{1.1}{\kappa m}$.

Quantum SmoothBoost terminates if $\tilde{s}<\kappa m$. Because $\tilde{s}$ might underestimate the true $s$ by at most a factor~1.1, upon termination we have $s<1.1\kappa m$ and hence Claim~\ref{smoothboost:empiricalerror} implies empirical error $\hat{err}(h)<1.1\kappa$. 
We set $\kappa=\eps/2.2$ in order to ensure  $\hat{err}(h)\leq \eps/2$. 
Like before, we choose the sample size~$m$ given by Eq.~\eqref{eq:mbound} to ensure generalization error $\leq\eps$.

The total number of iterations is still $O(\frac{1}{\eps\gamma^2})$.\footnote{There is one small change in the proof of  Claim~\ref{smoothboost:runningtime}: since we condition on all runs of quantum counting in Step~\ref{quantumsmoothboost:approxsum} giving an estimate of $\sum_i M_i^t$ up to multiplicative error 1.1, we now have $\sum_{i=1}^m M_i^t\geq \kappa m/1.1$ for all $t$ until termination.}
It remains to determine the complexity of one iteration. The most costly steps in one iteration are Steps~\ref{quantumsmoothboost:approxsum} and~\ref{quantumsmoothboost:preparestate}. As a subroutine these will use the fact that we can compute $M_i^t$ and $N_i^t$ using $O(t)=O(T)$ calls to the earlier base classifiers and the sample~$S$. 
\begin{enumerate}
    \item[Step~\ref{quantumsmoothboost:approxsum}.] The approximation of $s=\sum_{i=1}^m M^t_i$ up to multiplicative error 1.1  using  Theorem~\ref{th:approxcount} costs $\tilde{O}(T\sqrt{m})$ (for simplicity assume $\eps\gg 1/m$ to ensure the condition $s\geq 1$ in Theorem~\ref{th:approxcount} holds.)
    \item[Step~\ref{quantumsmoothboost:preparestate}.] Preparing one copy of $\ket{D^t}$ costs $\tilde{O}(T/\sqrt{\eps})$ by Section~\ref{ssec:prepqexample} (using our setting of $\kappa=\eps/2.2$), so overall this step costs $\tilde{O}(WT/\sqrt{\eps})$.
\end{enumerate}
Adding these costs shows that one iteration costs $\tilde{O}(T(W/\sqrt{\eps}+\sqrt{m}))$.
Plugging in $T=O(\frac{1}{\eps\gamma^2})$, and the same sample size $m=\tilde{O}(d/\eps^3\gamma^2)$ as for classical SmoothBoost (from Eq.~\eqref{eq:mbound}), gives our main result:

\begin{theorem}\label{th:qboostmain}
Let $\mathcal{W}$ be a $\gamma$-weak quantum learner of complexity $W$ for concept class $\mathcal{C}$, with hypothesis class $\mathcal{H}_{weak}$ of VC-dimension~$d$. Then given $m$ examples according to Eq.~\eqref{eq:mbound}, QuantumSmoothBoost is an $(\eps,\delta)$-PAC learner for $\mathcal{C}$, with hypothesis class $\mathcal{H}_{strong}$. It runs the weak learner $O(\frac{1}{\eps\gamma^2})$ times and uses 
\[
\tilde{O}(T^2(W/\sqrt{\eps}+\sqrt{m}))
=\tilde{O}\left(\frac{W}{\eps^{2.5}\gamma^4} + \frac{\sqrt{m}}{\eps^2\gamma^4}\right)
=\tilde{O}\left(\frac{W}{\eps^{2.5}\gamma^4} + \frac{\sqrt{d}}{\eps^{3.5}\gamma^5}\right)
\]
other operations (elementary computational steps, queries to the sample, and evaluations of base classifiers).
\end{theorem}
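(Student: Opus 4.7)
The plan is to reduce correctness to classical SmoothBoost via a union-bound over all quantum subroutine failures, and then add up the per-iteration costs. First I would observe that across the at most $T = O(1/(\eps\gamma^2))$ iterations the algorithm invokes the approximate-counting subroutine (Step~\ref{quantumsmoothboost:approxsum}) once and the state-preparation subroutine (Step~\ref{quantumsmoothboost:preparestate}) $W$ times, for a total of $O(TW)$ bounded-error quantum procedures. By repeating each of them $O(\log(TW/\delta))$ times and taking the majority / best estimate, we can make each individual failure probability $\ll \delta/(TW)$, so a union bound gives that with probability $\geq 1-\delta/2$ every subroutine delivers its guarantee. The extra $\mathrm{polylog}$ factors from this amplification are absorbed into $\tilde{O}(\cdot)$.

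Next I would argue correctness conditioned on that high-probability event. Since each $N_i^t$ and $M_i^t$ is computed on demand from the stored base classifiers $h_1,\ldots,h_{t-1}$ and the sample $S$ via the formulas of Step~\ref{smoothboost:weightupdate} of classical SmoothBoost, the trajectory of weights is identical to that of a classical run of SmoothBoost with the same weak-learner outputs. The only deviations from classical SmoothBoost are: (i) the stopping test uses a $1.1$-multiplicative estimate $\tilde{s}$ of $s$, and (ii) the distribution $D^t$ fed into Theorem~\ref{th:qexample} uses the true (unknown) $s$. Consequently, as already noted in the text just before the theorem, Claim~\ref{smoothboost:smooth} degrades harmlessly to $\max_i D_i^t \leq 1.1/(\kappa m)$, the running-time bound of Claim~\ref{smoothboost:runningtime} still gives $T = O(1/(\kappa\gamma^2))$ (with $\kappa$ replaced by $\kappa/1.1$), and Claim~\ref{smoothboost:empiricalerror} together with the termination condition $\tilde{s}<\kappa m$ (hence $s<1.1\kappa m$) gives $\hat{err}(h) < 1.1\kappa$. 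Setting $\kappa = \eps/2.2$ yields empirical error $\leq \eps/2$. Applying Theorem~\ref{thm:generalizationbound} to $\mathcal{H}_{strong}$ (whose VC-dimension is $O(Td\log(Td))$ by Claim~\ref{claim:adaboostvc}) with the sample size of Eq.~\eqref{eq:mbound} then upgrades this to generalization error $\leq \eps$ with probability $\geq 1 - \delta/2$; combined with the subroutine-failure bound, total success probability is $\geq 1-\delta$.

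For the complexity I would add up the per-iteration costs. The key observation is that one query to $M_i^t$ (or $N_i^t$) costs $\tilde{O}(t) = \tilde{O}(T)$: evaluate each of $h_1,\ldots,h_{t-1}$ on $x_i$, sum to obtain $N_i^t$, and apply the piecewise formula for $M_i^t$. Plugging this cost of a single ``oracle'' call into Theorem~\ref{th:approxcount} with $N = m$ gives $\tilde{O}(T\sqrt{m})$ for Step~\ref{quantumsmoothboost:approxsum} (using $\eps$ constant multiplicative error, and assuming $s\geq 1$, which holds as long as the booster has not yet terminated when $\eps \gg 1/m$). Feeding the same oracle into Theorem~\ref{th:qexample} with $\kappa = \eps/2.2$ gives $\tilde{O}(T/\sqrt{\eps})$ per quantum example, hence $\tilde{O}(WT/\sqrt{\eps})$ for the $W$ copies produced in Step~\ref{quantumsmoothboost:preparestate}. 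Summing and multiplying by the $T$ iterations gives $\tilde{O}(T^2(W/\sqrt{\eps}+\sqrt{m}))$ operations outside the weak learner, and the weak learner itself is invoked $T = O(1/(\eps\gamma^2))$ times. Substituting $T = O(1/(\eps\gamma^2))$ and $m = \tilde{O}(d/(\eps^3\gamma^2))$ from Eq.~\eqref{eq:mbound} yields the claimed bound.

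The only real subtlety, and the step I would scrutinize most carefully, is the propagation of the $1.1$-multiplicative error in $\tilde{s}$ through both the smoothness bound and the running-time analysis: one must check that the constants in Claims~\ref{smoothboost:smooth} and~\ref{smoothboost:runningtime} absorb this slack without forcing a different choice of $\kappa$ or an extra factor in $T$. Everything else is bookkeeping of $\mathrm{polylog}$ factors swallowed by $\tilde{O}(\cdot)$.
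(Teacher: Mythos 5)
Your proposal is correct and follows essentially the same route as the paper: a union bound over the $O(TW)$ amplified quantum subroutines, reduction of correctness to classical SmoothBoost with the $1.1$-multiplicative slack propagated through Claims~\ref{smoothboost:smooth}--\ref{smoothboost:empiricalerror} via $\kappa=\eps/2.2$, and the same per-iteration accounting ($\tilde{O}(T\sqrt{m})$ for approximate counting and $\tilde{O}(WT/\sqrt{\eps})$ for state preparation, each oracle call to $M_i^t$ costing $\tilde{O}(T)$). The "subtlety" you flag at the end is exactly the point the paper handles in its footnote on Claim~\ref{smoothboost:runningtime}, and it is absorbed into the constants as you expect.
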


\noindent
For direct comparison with the quantum boosting result of Arunachalam and Maity~\cite{arunachalam&maity:qboosting}, we instantiate this by setting $\eps=\delta=1/3$,
in which case the complexity of Quantum SmoothBoost is
\[
\tilde{O}\left(\frac{W}{\gamma^4} + \frac{\sqrt{d}}{\gamma^5}\right).
\]
This polynomially improves over the time complexity $\displaystyle\tilde{O}\left(\frac{W^{1.5}\sqrt{d}}{\gamma^{11}}\right)$ of the quantum version of AdaBoost of~\cite{arunachalam&maity:qboosting}, in the $W$-dependence but especially in the $\gamma$-dependence.

\section{Future work}

This work leaves open many questions for future work:
\begin{itemize}
    \item The $\gamma$-dependence of Quantum Smoothboost is still slightly worse than in  classical boosting ($1/\gamma^5$ vs $1/\gamma^4$). Is there a way to improve this further, or can we prove a lower bound on the $\gamma$-dependence for every quantum boosting algorithm that has $\sqrt{d}$-dependence on the VC-dimension of the weak learner's hypothesis class? 
    \item Our quantum version of SmoothBoost improves the cost per iteration but not the number of iterations, which remains $T=O(1/\eps\gamma^2)$. Can we reduce the number of iterations by quantizing SmoothBoost differently, or by quantizing some other boosting approach?
    \item Boosting has many applications in theory and practice. Can we find applications where quantum Smoothboost is particularly suitable---some problem where the weak learner has relatively large advantage~$\gamma$ and large  VC-dimension~$d$, so that the square-root improvement in $d$ dominates the worse dependence on $1/\gamma$?
    \item Can we do boosting for \emph{agnostic} learning, where the label $y$ of an example ($x,y)$ is not determined by~$x$ but $(x,y)$ is jointly generated by some distribution $\mathcal{D}$ on $\mathcal{X}\times\mathcal{Y}$?
    \item What about learning with various kind of noise in the sample: random classification noise, or Massart noise, or Tsybakov noise, or malicious noise?
    Servedio~\cite{servedio2003smooth} designed SmoothBoost motivated by its ability to deal with malicious noise on the labels of the sample: 
    if $1/100$ of the $m$ given examples have their label flipped, then a distribution that puts probability $\leq c/m$ on each $i$ only puts total probability $\leq c/100$ on the erroneous examples. Servedio used this to give a learning algorithm for linear threshold functions that is robust against small, $\gamma$-dependent amounts of malicious noise (see \cite{long&servedio:moremalicious} and references therein for follow-up work). 
    \item What about learning functions that have a larger range than just $\{-1,1\}$?
\end{itemize}

\subsection*{Acknowledgements}
We thank Srinivasan Arunachalam for many helpful comments, Min-Hsiu Hsieh for sending us an updated version of~\cite{wangea:qboosting} and answering some questions about this paper, and Yassine Hamoudi for answering a question about~\cite{hamoudiea:qhedge}.

\bibliographystyle{alpha}
\bibliography{qcs}

\end{document}